%% file: main.tex
\documentclass[11pt]{article}

\usepackage{arxiv}
\renewcommand{\headeright}{A Preprint}
\renewcommand{\undertitle}{A Preprint}
\renewcommand{\shorttitle}{Minimality of Random Moore Automata}

\usepackage[T1]{fontenc}
\usepackage[utf8]{inputenc}
\usepackage{lmodern}
\usepackage{microtype}

\usepackage{amsmath,amssymb,amsthm,mathtools}
\usepackage{dsfont}

\usepackage{enumitem}
\usepackage[hidelinks]{hyperref}
\usepackage{cleveref}
\usepackage{color}

\input{macros}

\title{Minimality of Random Moore Automata under Prefix-Dependent Congruences}

\author{
Mat\'ias Carrasco\textsuperscript{*}\\
Universidad ORT Uruguay\\
Facultad de Ingenier\'ia\\
\url{carrasco_m@ort.edu.uy}
\And
Sergio Yovine\textsuperscript{*}\\
Universidad ORT Uruguay\\
Facultad de Ingenier\'ia\\
\url{yovine@ort.edu.uy}
}

\date{}

\begin{document}

\twocolumn[
\maketitle

\begin{abstract}
We study prefix-dependent congruences for random deterministic transition systems with state outputs. In this setting, the admissible continuations used to compare two states may depend on the observed prefix, and two states are identified only if no common admissible continuation distinguishes their future outputs. The framework includes probabilistic deterministic finite automata as a motivating special case. We analyze the random transition model in which all transition values are independent and uniform. Each state is also assigned an independent label that specifies both its output and its set of admissible symbols. If two independent labels agree with probability strictly less than one, and every label has at least three admissible symbols, then the induced congruence is trivial with high probability. The proof combines a pruning process on pairs, a collision-free exploration controlling its early evolution, and a first-moment argument showing that the remaining pairs cannot organize into nontrivial equivalence classes.
\end{abstract}

\keywords{random Moore automata \and probabilistic automata \and minimality \and right congruences \and collision-free exploration}

\vspace{2em}
]

\begingroup
\renewcommand{\thefootnote}{\fnsymbol{footnote}}
\footnotetext[1]{This work was supported by Agencia Nacional de Investigaci\'on e Innovaci\'on (ANII), grant FMV\_1\_2023\_1\_175864.}
\endgroup

\input{section1-introduction}
\input{section2-automata-and-congruences}

\input{section3-random-model}
\input{section4-pair-pruning}
\input{section5-stable-block-systems}
\input{section6-main-results}
\input{section7-conclusion}

\bibliographystyle{elsarticle-num}
\bibliography{references}

\end{document}

%% file: macros.tex
\newtheorem{theorem}{Theorem}
\newtheorem{proposition}[theorem]{Proposition}
\newtheorem{lemma}[theorem]{Lemma}
\newtheorem{corollary}[theorem]{Corollary}

\theoremstyle{definition}
\newtheorem{definition}[theorem]{Definition}

\theoremstyle{remark}
\newtheorem{remark}[theorem]{Remark}

\newcommand{\Adm}{\operatorname{Adm}}
\newcommand{\Obs}{\operatorname{Obs}}
\newcommand{\obs}{\operatorname{obs}}

%% file: section1-introduction.tex
\section{Introduction}
\label{sec:introduction}

In many applications, the study of algorithmic complexity and implementation performance is done on random or ``average-case'' automata. Examples include the average-case analysis of Moore's minimization algorithm on random deterministic automata \cite{bassino2009average,david2012average, nicaud2014random}; the use of random Mealy machines in black-box checking \cite{peled1999black} and in the evaluation of gray-box learning algorithms \cite{abel2016gray}. In the probabilistic setting, related performance evaluations arise in learning algorithms for PDFA \cite{carrasco2024analyzing}.

Learning algorithms \cite{angluin1987learning} typically aim to recover, or approximate, a minimal representation of the observed behavior. Minimality is therefore a central property in automata learning. When random automata are used as benchmark instances, minimality with high probability may raise a validation issue: the learning process is unlikely to encounter genuine state duplication, although detecting and resolving such duplication is often crucial for correctness and termination. For random deterministic automata, minimality and related structural properties have been studied extensively \cite{bassino2012asymptotic,babaali2013construction}. To the best of our knowledge, the analogous high-probability minimality/non-minimality question has not been studied for random Moore automata. We address this gap by establishing a high-probability minimality result for a model of random Moore automata.

In the classical setting, two states are identified if no continuation distinguishes their future behavior. For PDFAs, a state carries an output distribution \cite{vidal2005probabilistic,balle2013learning}, and more generally, for Moore automata, each state carries an observable output label \cite{PEYRIERE2023113774}. Moreover, in applications such as constrained PDFA learning and constrained or grammar-guided generation for large language models \cite{carrasco2024analyzing,willard2023efficient,raspanti2025grammar}, the continuations available for comparison may be restricted by constraints imposed by the generation procedure. Thus, two prefixes should be compared only through continuations that are admissible after both prefixes. This leads naturally to prefix-dependent congruences, which differ from the classical Myhill--Nerode right congruence.

We consider deterministic Moore automata equipped with a prefix-dependent admissibility rule. In the finite-catalog case studied below, each state receives a label from a finite catalog. The label determines both the observable output and the set of admissible non-terminal symbols. A PDFA with finitely many possible output distributions is recovered by taking the observable output to be the corresponding probability distribution.

The random model is based on a random transition system on \(n\) states, in which all transition values are independent and uniformly distributed over the state set. In this model, we prove a high-probability minimality theorem. More precisely, if \(\boldsymbol A_\infty(n)\) denotes the set of non-diagonal pairs identified by the prefix-dependent congruence, then, under mild assumptions, we have \(\mathbb P(|\boldsymbol A_\infty(n)|>0)\to0\) as \(n\to\infty\).

The classical DFA model is recovered by taking a binary catalog and declaring all symbols admissible. Hence, our proof gives an alternative argument in the iid transition model, parallel to the high-probability minimality result of \cite{bassino2012asymptotic} for random accessible DFAs, extending the random-DFA minimality picture to finite catalogs with label-dependent admissible continuations.

For automata-theoretic applications, the reachable part from the initial state is often the relevant object. Full minimality immediately implies that the restriction of the congruence to the reachable part is also trivial with high probability.

The proof is based on two ideas. First, we follow the pairs of states that remain indistinguishable after successive rounds of comparison; these rounds define a natural pruning process. A pair survives one more round only if the two states have the same catalog label and their admissible successors also survived the previous round. Up to the required number of pruning rounds, the dependencies created by this exploration are sparse enough that the early evolution is controlled by a deterministic collision-free recursion, in the spirit of the differential-equation method for random processes \cite{Wormald1999,bennett2022gentle}.

The second idea is the main technical contribution of the paper. It uses the structure forced by transitivity and stability under admissible transitions: pairs that survive forever cannot be arranged arbitrarily, but must be the pairs inside nontrivial congruence classes. We show by a first-moment argument that, with high probability, no such nontrivial family of classes exists at the scale reached by the pruning process. This turns the intermediate pruning estimate into actual minimality.

The paper is organized as follows. Section~\ref{sec:automata-congruences} recalls Moore automata, PDFAs as a motivating special case, and introduces prefix-dependent congruences. Section~\ref{sec:random-model} defines the random finite-catalog model and states the standing assumptions. Section~\ref{sec:pair-pruning} introduces the pair-pruning process and proves the collision-free estimate. Section~\ref{sec:stable-block-systems} rules out nontrivial stable block systems at the scale reached by the pruning process. Section~\ref{sec:main-results} derives minimality. Section~\ref{sec:conclusion} discusses the accessible model and possible extensions.

%% file: section2-automata-and-congruences.tex
\section{Automata and prefix-dependent congruences}
\label{sec:automata-congruences}

Let \(\Sigma\) be a finite alphabet and write \(\Sigma^\ast\) for the free monoid of finite words over \(\Sigma\), with identity element \(\varepsilon\).

A deterministic Moore automaton is a tuple
\[
\mathcal A=(Q,\Sigma,\delta,q_0,o),
\]
where \(Q\) is a finite set of states, \(q_0\in Q\) is the initial state, \(\delta:Q\times\Sigma\to Q\) is a transition map, and \(o:Q\to\mathcal O\) assigns to each state an output value in a set \(\mathcal O\); see e.g.\ \cite{hopcroft2001automata}.

We extend \(\delta\) to a map \(\delta^\ast:Q\times\Sigma^\ast\to Q\) in the usual way: \(\delta^\ast(q,\varepsilon)=q\) and \(\delta^\ast(q,aw)=\delta^\ast(\delta(q,a),w)\). We also define \(o^\ast(q,w):=o(\delta^\ast(q,w))\). We simply write \(\delta^\ast(w)\) and \(o^\ast(w)\) in the special case \(q=q_0\).

Probabilistic deterministic finite automata are a motivating special case. If \(\$\notin\Sigma\) is a terminal symbol and \(\Sigma_\$=\Sigma\cup\{\$\}\), then a PDFA is obtained by taking \(\mathcal O=\Delta(\Sigma_\$)\) the probability simplex over $\Sigma_\$$, and \(o:Q\to\Delta(\Sigma_\$)\) assigns to each state an output distribution. Thus the Moore framework keeps the observable state data abstract, while PDFAs correspond to the case in which the output value is a probability distribution.

\subsection{Prefix-dependent admissibility}

We now introduce the mechanism that determines which continuations are admissible after a given prefix. In full generality, an admissibility scheme is a map \(S:\Sigma^\ast\to 2^\Sigma\), where \(S(u)\subseteq\Sigma\) is the set of symbols admissible after the prefix \(u\).

The admissibility predicate \(\Adm:\Sigma^\ast\to\{0,1\}\) is defined recursively by \(\Adm(\varepsilon)=1\) and for \(u\in\Sigma^\ast,a\in\Sigma\) by
\begin{equation}\label{eq:AdmDef}
\Adm(ua)=1
\Leftrightarrow
\Adm(u)=1
\text{ and }
a\in S(u).
\end{equation}
We say that \(u\) is admissible if \(\Adm(u)=1\).

The observation associated with a prefix is
\[
\Obs(u):=
\bigl(o^\ast(u),S(u)\bigr).
\]
Thus \(\Obs(u)\) records both the output reached after \(u\) and the set of symbols admissible after \(u\).

\begin{definition}[Prefix-dependent congruence on words]\label{def:congruence}
For \(u,v\in\Sigma^\ast\), we write \(u\equiv v\) iff \(\Adm(u)=\Adm(v)\) and for every \(w\in\Sigma^\ast\),
\[
\Adm(uw)=\Adm(vw)=1
\Rightarrow
\Obs(uw)=\Obs(vw).
\]
\end{definition}

In words, two prefixes are equivalent if they are simultaneously admissible or inadmissible, and no common admissible continuation distinguishes the observable data reached from them. All inadmissible prefixes are identified by this relation, consistently with the absorbing property: \(\Adm(u)=0\) implies \(\Adm(uw)=0\) for every \(w\in\Sigma^\ast\).

\begin{proposition}[Right congruence]
The relation \(\equiv\) is a right congruence on \(\Sigma^\ast\). That is, for all \(u,v\in\Sigma^\ast\),
\[
u\equiv v
\Rightarrow
ua\equiv va,
\quad
\text{for every }a\in\Sigma.
\]
\end{proposition}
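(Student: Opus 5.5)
The plan is to verify the two clauses of Definition~\ref{def:congruence} for the pair \(ua,va\) directly, reusing the hypothesis \(u\equiv v\) with suitably chosen continuations. The only point that needs care is the first clause, \(\Adm(ua)=\Adm(va)\). It is not automatic, because admissibility of the extension depends on the prefix-dependent set \(S(u)\). The key observation is that \(\Obs\) records \(S(\cdot)\), so equivalence at the empty continuation forces \(S(u)=S(v)\) whenever both prefixes are admissible.

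Fix \(a\in\Sigma\) and assume \(u\equiv v\), so in particular \(\Adm(u)=\Adm(v)\). I would split into two cases according to this common value.

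If \(\Adm(u)=\Adm(v)=0\), then by the recursion \eqref{eq:AdmDef} we have \(\Adm(ua)=\Adm(va)=0\). More generally \(\Adm(uaw)=\Adm(vaw)=0\) for every \(w\), by the absorbing property noted after Definition~\ref{def:congruence}. Hence the implication in the definition is vacuous, and \(ua\equiv va\).

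If \(\Adm(u)=\Adm(v)=1\), I first apply the hypothesis with \(w=\varepsilon\). This gives \(\Obs(u)=\Obs(v)\), and in particular \(S(u)=S(v)\). By \eqref{eq:AdmDef}, \(\Adm(ua)=1\iff a\in S(u)\iff a\in S(v)\iff\Adm(va)=1\), which is the first clause. For the second clause, take any \(w\in\Sigma^\ast\) with \(\Adm(uaw)=\Adm(vaw)=1\). Since \(uaw=u(aw)\) and \(vaw=v(aw)\), the hypothesis \(u\equiv v\) applied to the continuation \(aw\) yields \(\Obs(uaw)=\Obs(vaw)\). Thus \(ua\equiv va\).

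I expect no real obstacle here. The one step to be careful about is the first clause in the admissible case. There one must extract \(S(u)=S(v)\) from the empty continuation, which is exactly why \(S(u)\) is included in \(\Obs(u)\). Without it, \(ua\) and \(va\) could differ in admissibility and the right-congruence property would fail.
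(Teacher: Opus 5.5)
Your proof is correct and follows the paper's argument essentially verbatim. It uses the same case split on the common value \(\Adm(u)=\Adm(v)\), extracts \(S(u)=S(v)\) from the empty continuation in the admissible case, and applies \(u\equiv v\) to the continuation \(aw\) for the second clause (the only cosmetic difference is that you dispatch the second clause in the inadmissible case by vacuity rather than uniformly).
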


\begin{proof}
Let \(u\equiv v\) and \(a\in\Sigma\). First, \(\Adm(ua)=\Adm(va)\). Indeed, if \(\Adm(u)=\Adm(v)=0\), this follows from the absorbing definition of \(\Adm\). If \(\Adm(u)=\Adm(v)=1\), then applying \eqref{eq:AdmDef} and Definition \ref{def:congruence} with \(w=\varepsilon\) gives \(\Obs(u)=\Obs(v)\), hence \(S(u)=S(v)\), and therefore
\[
\Adm(ua)=1
\Leftrightarrow
a\in S(u)
\Leftrightarrow
a\in S(v)
\Leftrightarrow
\Adm(va)=1.
\]
Now let \(w\in\Sigma^\ast\) and suppose that
\[\Adm((ua)w)=\Adm((va)w)=1.\]
Since \(u\equiv v\), applying Definition \ref{def:congruence} with the continuation \(aw\) gives \(\Obs(u(aw))=\Obs(v(aw))\). Hence \(ua\equiv va\).
\end{proof}

\subsection{Guided systems and admissibility schemes}

Admissibility schemes arise naturally from guided generation in the context of language models and PDFAs. Let \(\mathcal A=(Q,\Sigma,\delta,q_0,o)\) be a PDFA and let \(\mathcal G=(R,\Sigma,\eta,r_0,\mathcal S)\) be a guide automaton where \(\mathcal S:R\to 2^\Sigma\) assigns admissible symbols to guide states.

The product system
\[
\mathcal A\otimes\mathcal G
=
(Q\times R,\Sigma,\tau,(q_0,r_0),O)
\]
is defined by \(\tau((q,r),a):=(\delta(q,a),\eta(r,a))\) and \(O(q,r):=o(q)\). It becomes an admissibility scheme by setting \(S_{(q,r)}:=\mathcal S(r)\). Thus a prefix-dependent rule driven by the guide state is represented as a state-dependent admissibility rule on the product automaton.

This covers constrained or guided generation schemes in which the output \(o(q)\) is a next-symbol distribution while the guide state restricts the symbols that may be generated next. One natural example is the top-\(k\) rule: one takes \(S_{q}\) to be the set of the \(k\) non-terminal symbols of largest probability under \(o(q)\). Another example is obtained when the guide has final states \(F\subseteq R\) and \(\mathcal S(r)\) consists of those symbols \(a\) for which \(\eta(r,a)\) is coaccessible to \(F\).

\subsection{Finite admissibility schemes}

The prefix-dependent schemes used in the sequel are finite in the following sense: admissibility is determined by the current state of the Moore automaton. Thus each state \(q\in Q\) is equipped with a set \(S_q\subseteq\Sigma\) and the induced prefix-dependent rule is \(S(u):=S_{\delta^\ast(u)}\). Equivalently, after reading a prefix \(u\), the symbols admissible for the next step are those in the set attached to the reached state \(\delta^\ast(u)\). In this case the observable attached to a state is
\[
\obs(q):=(o(q),S_q)\in\mathcal O\times 2^\Sigma,
\]
and the corresponding prefix observation satisfies
\[
\Obs(u)=\obs(\delta^\ast(u)).
\]
A word \(w=a_1\cdots a_t\) is admissible from a state \(q\) if, writing \(q_i:=\delta^\ast(q,a_1\cdots a_i)\), one has \(a_i\in S_{q_{i-1}}\) for every \(i=1,\dots,t\).

\begin{definition}[State congruence]
For \(q,q'\in Q\), we write \(q\equiv q'\) iff for every word \(w\in\Sigma^\ast\) admissible from both \(q\) and \(q'\), we have
\[
\obs(\delta^\ast(q,w))
=
\obs(\delta^\ast(q',w)).
\]
\end{definition}

Since the empty word is admissible from every state, the definition implies \(\obs(q)=\obs(q')\). In particular, because \(\obs(q)=(o(q),S_q)\), equivalent states have the same observable output and the same admissible set: \(o(q)=o(q')\) and \(S_q=S_{q'}\).

\begin{remark}[Recursive characterization]\label{rem:recursive}
The state congruence admits the following recursive characterization:
\(q\equiv q'\) iff \(\obs(q)=\obs(q')\) and
\[
\delta(q,a)\equiv \delta(q',a)
\quad
\text{for every }a\in S_q .
\]
Indeed, \(\obs(q)=\obs(q')\) implies \(S_q=S_{q'}\). Hence the one-step admissible symbols from \(q\) and \(q'\) coincide. The condition for continuations of length zero gives equality of observations, while any nonempty admissible continuation has the form \(aw\), with \(a\in S_q=S_{q'}\) and \(w\) admissible from both successor states. Thus testing all admissible continuations from \(q\) and \(q'\) is equivalent to testing all admissible continuations from the corresponding admissible successors. The claim follows by induction on the length of admissible continuations.
\end{remark}

\begin{definition}[Minimality]
The automaton \(\mathcal A\) is minimal with respect to \(\equiv\) if every equivalence class on the reachable part of \(Q\) is a singleton.
\end{definition}

The relation between the prefix-level congruence and the state congruence has to be interpreted with the absorbing inadmissible class in mind. If \(u\) and \(v\) are admissible prefixes, then the future tests after \(u\) and \(v\) are exactly the admissible tests from the reached states \(\delta^\ast(u)\) and \(\delta^\ast(v)\). Hence \(u\equiv v\)
iff \(\delta^\ast(u)\equiv\delta^\ast(v)\) whenever \(\Adm(u)=\Adm(v)=1\). On the other hand, if \(\Adm(u)=\Adm(v)=0\), then \(u\equiv v\) by the absorbing property, independently of the states reached by the complete transition map. Thus the full prefix congruence can be written as
\[
u\equiv v
\quad\Leftrightarrow\,
\begin{array}{l}
\Adm(u)=\Adm(v) \text{ and}\\
\Adm(u)=0
\text{ or }
\delta^\ast(u)\equiv\delta^\ast(v).
\end{array}
\]
Thus, the word-level quotient consists of the state quotient together with one absorbing inadmissible class.

Equivalently, one may adjoin a sink state \(\bot\notin Q\) and extend the transition map by sending every inadmissible prefix to \(\bot\). In this formulation, \(\bot\) represents the single absorbing class of inadmissible prefixes. We will not use this sink-state formulation. Thus inadmissible prefixes form a single absorbing class at the word level, whereas all state-level congruences are considered on the original state space \(Q\).

%% file: section3-random-model.tex
\section{The random model}
\label{sec:random-model}

We now define the random finite-catalog model. The alphabet \(\Sigma\) is fixed throughout and \(m:=|\Sigma|\).

\subsection{Transition structure}

For \(n\geq1\), let \(Q_n=[n]=\{1,\dots,n\}\). Let \(\operatorname{Tra}_{n,m}\) denote the set of all complete deterministic transition structures on \(Q_n\) over \(\Sigma\). Thus an element of \(\operatorname{Tra}_{n,m}\) is a map
\[
\delta:Q_n\times\Sigma\to Q_n.
\]
Throughout the paper we use the random transition model
\[
\boldsymbol\delta\sim\operatorname{Unif}(\operatorname{Tra}_{n,m}).
\]
Equivalently, the random variables \(\boldsymbol\delta(q,a)\), \(q\in Q_n,\ a\in\Sigma\), are independent and uniformly distributed on \(Q_n\). We use bold font to denote random objects.

The distinguished initial state is \(1\). The reachable part of a transition structure is \(\operatorname{acc}(\delta) := \{\delta^\ast(1,u):u\in\Sigma^\ast\}\). In this model the reachable part is typically a proper subset of \(Q_n\), but it has linear size with high probability \cite{carayol2012distribution}.

\subsection{Finite-catalog model}

Let \(\mathcal D=[D]\). For each \(j\in\mathcal D\), fix an output value \(o_j\in\mathcal O\) and an admissible set \(S_j\subseteq\Sigma\). We assume that the observable catalog entries are distinct:
\[
(o_i,S_i)\neq(o_j,S_j)
\quad\text{whenever } i\neq j.
\]
Thus equality of observable data is equivalent to equality of catalog labels.

\begin{definition}[Finite-catalog random Moore automaton]
Let \(\rho=(\rho_1,\dots,\rho_D)\in\Delta(\mathcal D)\) be a probability
distribution with \(\rho_j>0\) for all \(j\). Independently of
\(\boldsymbol\delta\), sample iid labels
\[
\boldsymbol\xi(q)\sim\rho,
\quad q\in Q_n.
\]
The resulting random output automaton is
\[
\boldsymbol{\mathcal A}_n=(Q_n,\Sigma,\boldsymbol\delta,1,\boldsymbol o),
\]
where \(\boldsymbol o(q):=o_{\boldsymbol\xi(q)}\). The admissible set at \(q\) is \(\boldsymbol S_q:=S_{\boldsymbol\xi(q)}\).
\end{definition}

For each catalog label \(j\), define
\[
k_j:=|S_j|,
\qquad
k_\ast:=\min_{1\leq j\leq D}k_j.
\]
The main result is proved under the following uniform lower bound on the number of admissible symbols:
\begin{equation}\label{eq:kstar-assumption}
k_\ast\geq 3.
\end{equation}
We also assume that the label law is non-degenerate:
\begin{equation}\label{eq:nondegenerate-rho}
    s_0:=\sum_{j=1}^D\rho_j^2<1.
\end{equation}
Equivalently, two independently sampled labels do not agree with probability
one. Since all \(\rho_j\) are assumed positive, this is the same as requiring
\(D\geq2\).

\subsection{PDFA top-\(k\) example}

A simple PDFA example is obtained by choosing a finite set of distinct output distributions \(\mathcal P=\{p_1,\dots,p_D\}\subseteq\Delta(\Sigma_\$)\), taking \(\mathcal O=\Delta(\Sigma_\$)\) and \(o_j=p_j\), and setting \(S_j\) to be the set of the \(k\) largest-probability non-terminal symbols under \(p_j\). Then \(k_j=k\) for all \(j\), so \(k_\ast=k\). If the labels are sampled uniformly from the catalog, then
\(\rho_j=1/D\), \(j=1,\dots,D\), and therefore
\[
s_0=\sum_{j=1}^D\rho_j^2=\frac1D.
\]

%% file: section4-pair-pruning.tex
\section{Pair pruning and collision-free evolution}
\label{sec:pair-pruning}

Let \(\mathcal E_n:=\bigl\{\{q,q'\}:q,q'\in Q_n,\ q\neq q'\bigr\}\) be the set of unordered non-diagonal pairs of states. We use the notation \(\operatorname{Diag}_n:=\{\{q\}:q\in Q_n\}\). If \(e=\{q,q'\}\in\mathcal E_n\) and \(a\in\Sigma\), write \(e\cdot a:=\{\boldsymbol{\delta}(q,a),\boldsymbol{\delta}(q',a)\}\in \mathcal E_n\cup \operatorname{Diag}_n\). That is, \(e\cdot a\) is diagonal if \(\boldsymbol{\delta}(q,a)=\boldsymbol{\delta}(q',a)\).

\subsection{The pruning process}

Define \(\boldsymbol A_0:=\{\{q,q'\}\in\mathcal E_n:\boldsymbol{\xi}(q)=\boldsymbol{\xi}(q')\}\) and for \(t\geq0\) set
\[
\begin{aligned}
\boldsymbol A_{t+1}:=
\bigl\{
e\in \boldsymbol A_0:
e\cdot a\in \boldsymbol A_t\cup \operatorname{Diag}_n
\, \forall a\in \boldsymbol S_e
\bigr\}
\end{aligned}
\]
where \(\boldsymbol S_e:=S_j\) if \(e=\{q,q'\}\) and \(\boldsymbol{\xi}(q)=\boldsymbol{\xi}(q')=j\). The sequence is decreasing: \(\boldsymbol A_t\supseteq \boldsymbol A_{t+1}\) for all \(t\geq 0\). We define
\[
\boldsymbol A_\infty(n):=\bigcap_{t\geq0}\boldsymbol A_t.
\]
The set \(\boldsymbol A_\infty(n)\) consists exactly of those unordered pairs whose endpoints belong to the same equivalence class of the state congruence defined by the finite admissibility scheme:
\begin{equation}\label{eq:Ainfty-equivalence}
\boldsymbol A_\infty(n)=\{\{q,q'\}\in\mathcal{E}_n:q\boldsymbol{\equiv} q'\}.
\end{equation}
Thus, the relation defined by \(q=q'\) or \(\{q,q'\}\in \boldsymbol A_\infty(n)\) coincides with \(\boldsymbol{\equiv}\).

\subsection{Collision-free recursion}

For \(j\in[D]\), let \(y_{t,j}\) denote the approximation obtained by ignoring
collisions, namely the probability that a fixed pair survives \(t\) rounds and
has common label \(j\) in the collision-free evolution. Define
\[
y_{0,j}:=\rho_j^2,
\quad
s_t:=\sum_{j=1}^D y_{t,j},
\]
and recursively
\begin{equation}\label{eq:mf-recursion}
y_{t+1,j}:=\rho_j^2 s_t^{k_j}.
\end{equation}
We refer to \eqref{eq:mf-recursion} as the collision-free recursion; the term \(s_t^{k_j}\) reflects the approximation that the \(k_j\) admissible successor comparisons behave independently. Thus
\begin{equation}\label{eq:s-recursion}
    s_{t+1}=\sum_{j=1}^D\rho_j^2s_t^{k_j}.
\end{equation}
Because \(k_j\geq k_\ast\) and \(s_t\leq s_0<1\),
\begin{equation}\label{eq:fast-decay}
    s_{t+1}\leq s_t^{k_\ast},
    \qquad
    s_t\leq s_0^{k_\ast^t}.
\end{equation}

\subsection{Local exploration}

Fix \(e=\{q,q'\}\in\mathcal E_n\). The depth-\(t\) exploration of \(e\) reveals the states \(\boldsymbol\delta^\ast(q,u)\) and \(\boldsymbol\delta^\ast(q',u)\) for all words \(u\) that may appear along an admissible branch of length at most \(t\).

Since every admissible set has size at most \(m\), the number of pair-nodes in this exploration is bounded by
\[
B_t:=1+m+\cdots+m^t
=
\frac{m^{t+1}-1}{m-1}.
\]
Thus, the exploration reveals at most \(2B_t\) state occurrences.

Let \(\boldsymbol{W}_t(e)\) be the set of words of length at most \(t\) that are explored from the pair \(e=\{q,q'\}\); that is, at each intermediate pair reached along the word, the next symbol is admissible for the common catalog label of that pair.  We define the collision event \(C_t(e)\) as the event that the map
\[
\{q,q'\}\times \boldsymbol W_t(e)\to Q_n,
\quad
(\eta,u)\mapsto \boldsymbol\delta^\ast(\eta,u)
\]
is not injective. Equivalently, two distinct state occurrences in the depth-\(t\) exploration of \(e\) correspond to the same state of \(Q_n\).

\begin{lemma}[Collision bound for local exploration]\label{lem:tree-like}
For every fixed pair \(e\in\mathcal E_n\) and every \(t\geq0\), we have
\[
\mathbb P\left(C_t(e)\right)
\leq
\frac{2B_t^2}{n}.
\]
\end{lemma}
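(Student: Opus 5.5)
We reveal the depth-\(t\) exploration of \(e=\{q,q'\}\) sequentially, generating each state occurrence from fresh transition values, and bound the probability that some newly revealed state coincides with a previously revealed one. Fix a breadth-first order on the pair-nodes of the exploration tree; within each pair-node, first handle \(q\)'s side and then \(q'\)'s side. The root contributes the two occurrences \(q\neq q'\), which are distinct by assumption. A non-root occurrence is \(\boldsymbol\delta^\ast(\eta,ua)=\boldsymbol\delta(\boldsymbol\delta^\ast(\eta,u),a)\), with \(u\in\boldsymbol W_{t-1}(e)\) and \(a\) admissible at the pair reached by \(u\). Whether \(ua\) is explored depends only on the labels \(\boldsymbol\xi\) of already revealed states. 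Labels are independent of \(\boldsymbol\delta\), so we may condition on all labels throughout.

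Let \(F_i\) be the event that the first \(i\) revealed occurrences are pairwise distinct. The key observation is the following. On \(F_{i-1}\), the parent occurrences \(\boldsymbol\delta^\ast(\eta,u)\) of all previously revealed occurrences are distinct states. Hence each previously used transition variable \(\boldsymbol\delta(p,a)\) corresponds to a distinct pair \((p,a)\) with \(p\) a revealed state. The \(i\)-th occurrence queries \(\boldsymbol\delta(p,a)\), where \(p\) is the state of its parent occurrence. This pair \((p,a)\) has not been queried before: a repeat would need the same parent state and the same symbol, which injectivity on \(F_{i-1}\) excludes, since distinct parent occurrences are distinct states and the same parent occurrence uses each symbol once.

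Therefore, conditionally on the history \(\mathcal F_{i-1}\) (all revealed transitions and labels), on \(F_{i-1}\) the value \(\boldsymbol\delta(p,a)\) is uniform on \(Q_n\), independent of \(\mathcal F_{i-1}\). The probability that it equals one of the at most \(i-1\) previously revealed states is therefore at most \((i-1)/n\). Summing over \(i\le N\), where \(N\le 2B_t\) is the (random but bounded) number of occurrences, gives
\[
\mathbb P(C_t(e))\le\sum_{i=2}^{2B_t}\frac{i-1}{n}\le\frac{(2B_t)^2}{2n}=\frac{2B_t^2}{n}.
\]
The first collision occurs at some step \(i\) with \(F_{i-1}\) holding, so \(C_t(e)\subseteq\bigcup_i (F_{i-1}\setminus F_i)\), and the union bound applies.

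The main obstacle is making this sequential revealing rigorous, because the tree is random: its branching depends on the labels and on which pairs become diagonal. Handling labels is easy, since we condition on them. For diagonal pairs, we can either keep exploring both coordinates (a coincidence there is itself a collision) or stop the process at the first collision. Stopping is harmless because \(C_t(e)\) only asks whether a collision occurs. The one point that needs care is that before the first collision every query is to a fresh variable. This is exactly the injectivity argument above, formalized with a stopping time at the first collision and the optional-stopping-free bound \(\mathbb P(\text{collision at step } i\mid\mathcal F_{i-1})\le (i-1)/n\).
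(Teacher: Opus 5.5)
Your proposal is correct and follows essentially the same route as the paper: breadth-first sequential exposure of the exploration, a bound of \((i-1)/n\) on the probability that the \(i\)-th revealed occurrence is the first collision, and a union bound giving \(\sum_{i=2}^{2B_t}(i-1)/n=2B_t(2B_t-1)/(2n)\le 2B_t^2/n\). The paper only asserts that each new occurrence comes from a fresh transition value. Your injectivity argument (before the first collision, each query \(\boldsymbol\delta(p,a)\) hits a previously unrevealed transition), together with conditioning on the labels so that the random tree shape is determined by the history, makes that step rigorous.
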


\begin{proof}
Fix \(e=\{q,q'\}\). We expose the depth-\(t\) exploration of \(e\) in a deterministic breadth-first order, breaking ties by a fixed ordering of \(\Sigma\). Let \(\boldsymbol q_1,\boldsymbol q_2,\ldots,\boldsymbol q_{\boldsymbol N}\) be the sequence of state occurrences revealed in this order. The number \(\boldsymbol N\) may depend on the admissible branches encountered during the exploration, but always satisfies \(\boldsymbol N\leq 2B_t\).

Let \(\mathcal F_i:=\sigma(\boldsymbol q_1,\ldots,\boldsymbol q_i)\) be the filtration associated with the exploration process and define the first collision time
\[
\boldsymbol\tau:=\inf\{i\geq 2: \boldsymbol q_i\in\{\boldsymbol q_1,\ldots,\boldsymbol q_{i-1}\}\},
\]
with the convention \(\boldsymbol\tau=\infty\) if no such \(i\) exists. Then
\[
C_t(e)=\{\boldsymbol\tau\leq \boldsymbol N\}.
\]

For \(2\leq i\leq 2B_t\), on the event \(\{i\leq \boldsymbol\tau, \boldsymbol N\}\), no collision has occurred before \(\boldsymbol q_i\) is exposed. Hence \(\boldsymbol q_1,\ldots,\boldsymbol q_{i-1}\) are distinct. Apart from the deterministic root occurrences \(q\) and \(q'\), each new state occurrence \(\boldsymbol q_i\) is obtained by revealing a transition value which, conditionally on \(\mathcal F_{i-1}\), is uniform on \(Q_n\) and independent of the previously revealed transition values. For the deterministic root occurrences, the same bound below is trivial. Therefore, denoting
\[
E_i:=
\left\{
\boldsymbol q_i\in\{\boldsymbol q_1,\ldots,\boldsymbol q_{i-1}\},
\, i\leq \boldsymbol N
\right\},
\]
we have
\[
\begin{aligned}
\mathbb P\left(
E_i
\,\middle|\,
\mathcal F_{i-1}
\right)
\mathds{1}_{\{\boldsymbol \tau\geq i\}}
=
\frac{i-1}{n}\,
\mathds{1}_{\{\boldsymbol \tau\geq i\}}.
\end{aligned}
\]
Since
\[
\{\boldsymbol\tau=i\leq \boldsymbol N\}
\subseteq
\{\boldsymbol\tau\geq i\}
\cap
E_i,
\]
we have
\[
\begin{aligned}
\mathbb P(\boldsymbol \tau=i\leq \boldsymbol N)
&\leq
\mathbb E\left[
\mathds{1}_{\{\boldsymbol \tau\geq i\}}
\mathbb P\left(
E_i
\,\middle|\,
\mathcal F_{i-1}
\right)
\right]  \\
&=
\mathbb E\left[
\mathds{1}_{\{\boldsymbol \tau\geq i\}}
\frac{i-1}{n}
\right]
\leq
\frac{i-1}{n}.
\end{aligned}
\]
Since \(\boldsymbol N\leq 2B_t\), we get
\[
\begin{aligned}
\mathbb P(C_t(e))
& =
\mathbb P(\boldsymbol \tau\leq \boldsymbol N)
\leq
\sum_{i=2}^{2B_t}\mathbb P(\boldsymbol \tau=i\leq \boldsymbol N)\\
&\leq
\sum_{i=2}^{2B_t}\frac{i-1}{n}
=
\frac{2B_t(2B_t-1)}{2n}
\leq
\frac{2B_t^2}{n}.
\end{aligned}
\]
This completes the proof.
\end{proof}

\begin{proposition}[Collision-free approximation for one pair]\label{prop:one-pair-mf}
For every \(t\geq0\), every \(j\in\mathcal{D}\), and every \(e=\{q,q'\}\in\mathcal E_n\), we have
\[
\mathbb P\left(e\in \boldsymbol A_t,\boldsymbol \xi(q)=\boldsymbol \xi(q')=j\right)
=
y_{t,j}+O\left(\frac{B_t^2}{n}\right).
\]
Consequently
\(
\mathbb P(e\in \boldsymbol A_t)
=
s_t+O\left(\frac{B_t^2}{n}\right).
\)
\end{proposition}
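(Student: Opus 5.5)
The plan is to couple the event \(\{e\in\boldsymbol A_t,\ \boldsymbol\xi(q)=\boldsymbol\xi(q')=j\}\) with an idealized tree event on the complement of the collision event \(C_t(e)\). We then show that the tree event has probability exactly \(y_{t,j}\). By Lemma~\ref{lem:tree-like}, replacing one event by the other costs at most \(\mathbb P(C_t(e))\leq 2B_t^2/n\).

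\textbf{Step 1 (membership is a local event).} By induction on \(t\), the event \(\{e\in\boldsymbol A_t\}\) is measurable with respect to the labels and transitions revealed by the depth-\(t\) exploration of \(e\). For \(t=0\) it depends only on \(\boldsymbol\xi(q),\boldsymbol\xi(q')\). For \(t+1\), membership requires \(e\in\boldsymbol A_0\), and then \(e\cdot a\in\boldsymbol A_t\cup\operatorname{Diag}_n\) for each \(a\in\boldsymbol S_e\). Each such condition is determined by the depth-\(t\) exploration from \(e\cdot a\), and this exploration is contained in the depth-\((t+1)\) exploration of \(e\). When we "reveal" a state occurrence, we reveal its label as well.

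\textbf{Step 2 (tree event).} Define the tree event \(T_{t,j}\) recursively, exactly as membership in \(\boldsymbol A_t\) with common label \(j\), but evaluated on the exploration tree with a diagonal successor declared a failure. On \(C_t(e)^c\) all revealed state occurrences are distinct, so no successor pair is diagonal. Hence on \(C_t(e)^c\) the events \(\{e\in\boldsymbol A_t,\ \boldsymbol\xi(q)=\boldsymbol\xi(q')=j\}\) and \(T_{t,j}\) coincide. Their symmetric difference is therefore contained in \(C_t(e)\), which gives
\[
\bigl|\mathbb P(e\in\boldsymbol A_t,\boldsymbol\xi(q)=\boldsymbol\xi(q')=j)-\mathbb P(T_{t,j}\cap C_t(e)^c)\bigr|=0 .
\]
It also gives \(|\mathbb P(T_{t,j})-\mathbb P(T_{t,j}\cap C_t(e)^c)|\leq 2B_t^2/n\).

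\textbf{Step 3 (computing \(\mathbb P(T_{t,j})\)).} This is the main obstacle: \(T_{t,j}\) must be defined on an idealized object so that its probability is exactly \(y_{t,j}\). To do this, build an auxiliary random labelled tree. Its pair-nodes carry two independent \(\rho\)-labels, and the children for each admissible symbol get fresh independent labels. This is the law of the exploration when every revealed transition lands on a fresh state.

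Couple the real exploration with this tree by resampling any collided occurrence independently. The two agree off \(C_t(e)\), since off \(C_t(e)\) the revealed transitions are distinct uniform states whose labels are iid \(\rho\) and independent of \(\boldsymbol\delta\). Define \(T_{t,j}\) on the auxiliary tree. Conditioning on the root labels being both \(j\) (probability \(\rho_j^2\)), the \(k_j\) child subtrees are independent copies. Each survives \(t-1\) rounds with probability \(\sum_i y_{t-1,i}=s_{t-1}\) by induction. Hence \(\mathbb P(T_{t,j})=\rho_j^2 s_{t-1}^{k_j}=y_{t,j}\), matching \eqref{eq:mf-recursion}, with base case \(y_{0,j}=\rho_j^2\).

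Combining the three steps, the probability differs from \(y_{t,j}\) by at most \(2\mathbb P(C_t(e))\leq 4B_t^2/n\). Summing over the finitely many \(j\in\mathcal D\) gives the consequence with \(s_t\) and a constant depending only on \(D\).
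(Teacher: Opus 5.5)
Your proposal is correct and is essentially the paper's proof: an auxiliary tree with independent \(\rho\otimes\rho\) label pairs, whose survival probability is exactly \(y_{t,j}\) by induction, is coupled to the real exploration so that the two survival events agree off \(C_t(e)\), and Lemma~\ref{lem:tree-like} then bounds the error. The only slip is the displayed ``\(=0\)'' in Step~2: that difference equals \(\mathbb P(\{e\in\boldsymbol A_t,\ \boldsymbol\xi(q)=\boldsymbol\xi(q')=j\}\cap C_t(e))\), so it should read \(\leq\mathbb P(C_t(e))\); the paper instead bounds the symmetric difference with the auxiliary event directly and applies the same coupling to the union over \(j\), which gives \(2B_t^2/n\) with no dependence on \(D\).
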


\begin{proof}
Fix \(e=\{q,q'\}\in\mathcal E_n\), \(j\in\mathcal{D}\), and \(t\geq0\). Write
\[
H_{t,j}(e)
:=
\{e\in \boldsymbol A_t,\boldsymbol \xi(q)=\boldsymbol \xi(q')=j\}.
\]
We compare the actual depth-\(t\) exploration of \(e\) with an ideal collision-free tree process.

Define the auxiliary process as follows. To each word \(u\in \Sigma^\ast\) we attach an independent pair of catalog labels
\[
(\boldsymbol\zeta_u,\boldsymbol\zeta'_u)\in\mathcal D^2,
\]
with common law \(\rho\otimes\rho\). The root is the empty word \(\varepsilon\). For \(r\geq0\), define recursively the events \(H^{\text{CF}}_{r,j}(u)\) where \(u\in\Sigma^\ast\) and \(j\in\mathcal{D}\) by
\[
H^{\text{CF}}_{0,j}(u)
:=
\{\boldsymbol\zeta_u=\boldsymbol\zeta'_u=j\},
\]
and for \(r\geq0\)
\[
H^{\text{CF}}_{r+1,j}(u)
:=
\{\boldsymbol\zeta_u=\boldsymbol\zeta'_u=j\}
\cap
\bigcap_{a\in S_j}
\bigcup_{\ell=1}^D
H^{\text{CF}}_{r,\ell}(ua).
\]
In words, this is the event that the two labels at \(u\) are both equal to \(j\), and that for every admissible symbol \(a\in S_j\) the child \(ua\) satisfies the depth-\(r\) survival condition with some common label \(\ell\in\mathcal{D}\).

We first compute the auxiliary survival probabilities. Clearly,
\[
\mathbb P\left(H^{\text{CF}}_{0,j}(u)\right)
=
\mathbb P(\boldsymbol\zeta_u=\boldsymbol\zeta'_u=j)
=
\rho_j^2
=
y_{0,j}.
\]
Suppose that for some \(r\geq0\)
\[
\mathbb P\left(H^{\text{CF}}_{r,\ell}(u)\right)=y_{r,\ell}
\qquad
\text{for every } \ell\in\mathcal{D}
\]
and every \(u\in\Sigma^\ast\). Then, for each admissible child \(ua\)
\[
\mathbb P\left(
\bigcup_{\ell=1}^D H^{\text{CF}}_{r,\ell}(ua)
\right)
=
\sum_{\ell=1}^D y_{r,\ell}
=
s_r,
\]
because the events \(H^{\text{CF}}_{r,\ell}(ua)\), \(\ell\in\mathcal{D}\), are disjoint. Moreover, the subtrees rooted at the children \(ua\), \(a\in S_j\), are
independent. Hence
\[
\mathbb P\left(H^{\text{CF}}_{r+1,j}(u)\right)
=
\rho_j^2 s_r^{k_j}
=
y_{r+1,j},
\]
where \(k_j=|S_j|\). By induction, \(\mathbb P\left(H^{\text{CF}}_{t,j}(\varepsilon)\right)=y_{t,j}\).

It remains to compare the true exploration with the auxiliary collision-free process. On \(C_t(e)^c\) the map
\[
\{q,q'\}\times \boldsymbol W_t(e)\to Q_n,
\qquad
(\eta,u)\mapsto \boldsymbol\delta^\ast(\eta,u),
\]
is injective. Therefore all state occurrences revealed by the depth-\(t\) exploration are distinct. Since the random labels \(\{\boldsymbol\xi(x):x\in Q_n\}\) are independent with law \(\rho\), the labels seen at the distinct state occurrences of the exploration are independent with law \(\rho\). Moreover, before any collision occurs different successor branches are generated by revealing fresh transition values, and hence the corresponding descendant explorations are independent.

Consequently, the true exploration and the auxiliary process may be coupled so that on \(C_t(e)^c\) we have \(\mathds{1}_{H_{t,j}(e)}=\mathds{1}_{H^{\text{CF}}_{t,j}(\varepsilon)}\). It follows that
\[
\begin{aligned}
\left|
\mathbb P\left(H_{t,j}(e)\right)
-
\mathbb P\left(H^{\text{CF}}_{t,j}(\varepsilon)\right)
\right|
&\leq
\mathbb P\left(
\mathds{1}_{H_{t,j}(e)}
\neq
\mathds{1}_{H^{\text{CF}}_{t,j}(\varepsilon)}
\right)  \\
&\leq
\mathbb P(C_t(e)).
\end{aligned}
\]
By Lemma~\ref{lem:tree-like}, the last term is upper-bounded by \(\frac{2B_t^2}{n}\). Therefore
\(
\left|
\mathbb P\bigl(e\in \boldsymbol A_t,\ \boldsymbol\xi(q)=\boldsymbol\xi(q')=j\bigr)
-
y_{t,j}
\right|
\leq
\frac{2B_t^2}{n}.
\)
Finally, define
\(
H_t(e):=\{e\in \boldsymbol A_t\}\),
\(H_t^{\mathrm{CF}}(\varepsilon)
:=
\bigcup_{j=1}^D H^{\mathrm{CF}}_{t,j}(\varepsilon).
\)
The events \(H^{\mathrm{CF}}_{t,j}(\varepsilon)\), \(j\in\mathcal D\), are pairwise disjoint. Therefore
\[
\mathbb P\left(H_t^{\mathrm{CF}}(\varepsilon)\right)
=
\sum_{j=1}^D y_{t,j}
=
s_t.
\]
Under the same coupling, on \(C_t(e)^c\) we also have
\(
\mathds{1}_{H_t(e)}
=
\mathds{1}_{H_t^{\mathrm{CF}}(\varepsilon)}.
\)
Hence
\(
\left|
\mathbb P(e\in \boldsymbol A_t)-s_t
\right|
\leq
\mathbb P(C_t(e))
\leq
\frac{2B_t^2}{n}.
\)
This completes the proof.
\end{proof}

\begin{corollary}[Reduction to an intermediate scale]\label{cor:mesoscopic-entry}
Assume that \(s_0<1\) as in \eqref{eq:nondegenerate-rho}. For every \(0<\beta<1\), there exists a deterministic number of rounds
\[
t_n=O_{\beta,k^\ast,s_0}(\log\log n)
\]
such that
\(
\mathbb P\left(|\boldsymbol A_{t_n}|>n^{1+\beta}\right)\to0
\)
when \(n\to\infty\).
\end{corollary}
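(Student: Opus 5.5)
The plan is a first-moment argument combined with Markov's inequality, using Proposition~\ref{prop:one-pair-mf} to control \(\mathbb E|\boldsymbol A_t|\). By linearity of expectation over the \(\binom n2\le n^2/2\) pairs in \(\mathcal E_n\), Proposition~\ref{prop:one-pair-mf} gives
\[
\mathbb E|\boldsymbol A_t|
\le \frac{n^2}{2}\Bigl(s_t+\frac{2B_t^2}{n}\Bigr)
\le \frac{n^2}{2}\,s_0^{k_\ast^t}+nB_t^2,
\]
where the second inequality uses the fast decay \eqref{eq:fast-decay}. Markov's inequality then yields
\[
\mathbb P\bigl(|\boldsymbol A_t|>n^{1+\beta}\bigr)
\le \tfrac12 n^{1-\beta}s_0^{k_\ast^t}+n^{-\beta}B_t^2 .
\]
It therefore suffices to choose \(t=t_n\) so that both terms tend to \(0\).

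For the first term, I want \(k_\ast^{t}\log(1/s_0)\ge 2\log n\), so that \(s_0^{k_\ast^t}\le n^{-2}\). This gives a first term of order \(n^{-1-\beta}\to0\). Since \(k_\ast\ge 3\ge2\) and \(s_0<1\) by \eqref{eq:nondegenerate-rho}, it is enough to take
\[
t_n:=\Bigl\lceil \log_{k_\ast}\frac{2\log n}{\log(1/s_0)}\Bigr\rceil
= O_{k_\ast,s_0}(\log\log n),
\]
which is deterministic, as the statement requires. (Any choice forcing \(s_{t_n}\le n^{-1-\beta}\) would work equally well.)

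The second term is where the tension lies, and I expect it to be the only real obstacle. The collision error \(B_t^2/n\) grows like \(m^{2t}\), whereas \(t_n\) must be large enough to kill \(n^2 s_{t_n}\). With the above choice, \(m^{2t_n}\le m^{2}\exp\bigl(2\log m\cdot\log_{k_\ast}(C\log n)\bigr)=(\log n)^{O(\log m/\log k_\ast)}\), which is polylogarithmic in \(n\). Since \(B_t\le m^{t+1}\) for \(m\ge2\), we get \(B_{t_n}^2=(\log n)^{O(1)}\), and hence \(n^{-\beta}B_{t_n}^2\to0\) for every fixed \(\beta>0\). The point to check carefully is exactly this: the doubly exponential decay \(s_0^{k_\ast^t}\) reaches polynomial smallness after only \(O(\log\log n)\) rounds, so the exponential-in-\(t\) exploration size stays subpolynomial.

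The resulting implicit constants depend on \(m\) as well as on \(k_\ast\), \(s_0\) and \(\beta\), and this is consistent with \(m\) being fixed throughout the paper. In this argument \(t_n\) itself does not depend on \(\beta\); \(\beta\) enters only the convergence rates.
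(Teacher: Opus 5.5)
Your proposal is correct and follows the paper's proof: linearity of expectation with Proposition~\ref{prop:one-pair-mf}, the doubly exponential decay \eqref{eq:fast-decay}, a polylogarithmic bound on $B_{t_n}$ because $t_n=O(\log\log n)$, and Markov's inequality. The only difference is the schedule. The paper takes a $\beta$-dependent $t_n$ forcing $s_{t_n}\le n^{-(1-\beta/2)}$, which gives the bound $C_4 n^{-\beta/2}$. Your $\beta$-independent choice forcing $s_{t_n}\le n^{-2}$ makes the main term negligible, so only the collision error $n^{-\beta}B_{t_n}^2$ remains.
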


\begin{proof}
Fix \(0<\beta<1\), and choose an auxiliary exponent \(\alpha:=\beta/2\). Define
\[
t_n:=\left\lceil\log_{k_\ast}\left(\frac{(1-\alpha)\log n}{-\log s_0}\right)\right\rceil.
\]
Then \(t_n=O(\log\log n)\). Moreover, \(s_{t_n}\leq n^{-(1-\alpha)}\) by \eqref{eq:fast-decay}. Since \(B_t\leq2 m^t\) and \(t_n=O(\log\log n)\), there exists a constant \(C>0\), depending on \(m\), such that \(B_{t_n}\leq (\log n)^C\). Hence
\[
\frac{B_{t_n}^2}{n}
\leq
\frac{(\log n)^{2C}}{n}.
\]
By Proposition~\ref{prop:one-pair-mf}, uniformly over fixed \(e\in\mathcal E_n\)
\[
\mathbb P(e\in \boldsymbol A_{t_n})
=
s_{t_n}
+
O\left(\frac{B_{t_n}^2}{n}\right).
\]
Therefore, by linearity of expectation
\[
\begin{aligned}
\mathbb E|\boldsymbol A_{t_n}|
&=
\sum_{e\in\mathcal E_n}
\mathbb P(e\in \boldsymbol A_{t_n}) \\
&\leq
\binom n2
\left(
n^{-(1-\alpha)}
+
C_1\frac{(\log n)^{2C}}{n}
\right) \\
&\leq
C_2 n^{1+\alpha}
+
C_3 n(\log n)^{2C}.
\end{aligned}
\]
Since \(\alpha>0\), \(n(\log n)^{2C}=o(n^{1+\alpha})\). Thus, for all sufficiently large \(n\)
\[
\mathbb E|\boldsymbol A_{t_n}|
\leq
C_4 n^{1+\alpha}.
\]
Finally, Markov's inequality gives
\[
\begin{aligned}
\mathbb P\left(|\boldsymbol A_{t_n}|>n^{1+\beta}\right)
&\leq
\frac{\mathbb E|\boldsymbol A_{t_n}|}{n^{1+\beta}}
&\leq
C_4 n^{-\beta/2}\to 0
\end{aligned}
\]
as claimed.
\end{proof}

%% file: section5-stable-block-systems.tex
\section{Stable block systems and obstruction bounds}
\label{sec:stable-block-systems}

Corollary~\ref{cor:mesoscopic-entry} shows that, for every \(0<\beta<1\), the pruning process reaches size at most \(n^{1+\beta}\) after \(O(\log\log n)\) steps. To upgrade this estimate to minimality, we use the fact that the limiting set is generated by equivalence classes of the state congruence. Such classes cannot be arbitrary: they must have constant observable catalog label and be stable under admissible transitions. We formalize this structure through stable block systems and prove a first-moment bound showing that, with high probability, no nonempty stable block system of size at most \(n^{1+\beta}\) exists, provided \(k_\ast\geq3\) and \(\beta<1-2/k_\ast\).

\subsection{Stable block systems}

Let \(\mathcal B=\{B_1,\dots,B_L\}\) be a family of pairwise disjoint subsets of \(Q_n\), with \(|B_i|=b_i\geq2\). The family \(\mathcal B\) defines a partial partition of \(Q_n\): the blocks \(B_i\) are the non-singleton classes and every point outside
\[
V(\mathcal B):=\bigcup_{i=1}^L B_i
\]
is a singleton class. Write \(\sim_{\mathcal B}\) for this equivalence relation.

Define
\[
r(\mathcal B):=\sum_{i=1}^L b_i
\text{ and }
h(\mathcal B):=\sum_{i=1}^L \binom{b_i}{2}.
\]
Thus \(h(\mathcal B)\) is the number of unordered non-diagonal pairs identified by \(\sim_{\mathcal B}\).

\begin{definition}[Stable block system]
A block system \(\mathcal B\) is called stable if the following conditions hold.
\begin{enumerate}[label=(\roman*)]
\item Each block has a constant catalog label: for each \(i\), there is a label \(j_i\in\mathcal{D}\) such that \(\boldsymbol\xi(q)=j_i\) for all \(q\in B_i\).
\item For every block \(B_i\), admissible symbol \(a\in S_{j_i}\) and \(q,q'\in B_i\), \(\boldsymbol\delta(q,a)\sim_{\mathcal B}\boldsymbol\delta(q',a)\).
\end{enumerate}
Equivalently, for each \(i\) and \(a\in S_{j_i}\) the image set
\[
\boldsymbol\delta(B_i,a)
:=
\{\boldsymbol\delta(q,a):q\in B_i\}
\]
is contained in a single class of \(\sim_{\mathcal B}\), where singleton classes are included.
\end{definition}

\begin{lemma}\label{lem:Ainfty-gives-stable-block-system}
If \(|\boldsymbol A_\infty(n)|>0\), then the non-singleton equivalence classes of \(\boldsymbol{\equiv}\) form a stable block system \(\boldsymbol{\mathcal B}_\infty\) and
\[
h(\boldsymbol{\mathcal B}_\infty)=|\boldsymbol A_\infty(n)|.
\]
\end{lemma}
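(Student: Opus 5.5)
The plan is to take \(\boldsymbol{\mathcal B}_\infty\) to be the family of non-singleton equivalence classes of \(\boldsymbol\equiv\), and then verify the block-system axioms, the two stability conditions, and the count identity in turn. I will use \eqref{eq:Ainfty-equivalence} throughout: \(\boldsymbol A_\infty(n)\) is exactly the set of non-diagonal pairs \(\{q,q'\}\) with \(q\boldsymbol\equiv q'\). I will also use the recursive characterization of Remark~\ref{rem:recursive}.

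\emph{Block system.} First I note that \(\boldsymbol\equiv\) is an equivalence relation on \(Q_n\). Reflexivity and symmetry are immediate from the definition. For transitivity, suppose \(q\equiv q'\equiv q''\) and let \(w\) be admissible from both \(q\) and \(q''\). By induction along \(w\), using \(S_q=S_{q'}\) at each step (Remark~\ref{rem:recursive}), \(w\) is also admissible from \(q'\), so the observations chain together. The equivalence classes are pairwise disjoint. The non-singleton ones all have size \(b_i\geq2\), so they form a block system \(\boldsymbol{\mathcal B}_\infty\), and \(\sim_{\boldsymbol{\mathcal B}_\infty}\) coincides with \(\boldsymbol\equiv\). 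The family is nonempty because \(|\boldsymbol A_\infty(n)|>0\) supplies a pair \(q\neq q'\) with \(q\equiv q'\).

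\emph{Stability.} For condition (i), fix a block \(B_i\) and two states \(q,q'\in B_i\). Taking the empty word shows \(\obs(q)=\obs(q')\). Since the catalog entries are distinct, this forces \(\boldsymbol\xi(q)=\boldsymbol\xi(q')\), so the block has a constant label \(j_i\). For condition (ii), take \(a\in S_{j_i}=\boldsymbol S_q\). Remark~\ref{rem:recursive} gives \(\boldsymbol\delta(q,a)\equiv\boldsymbol\delta(q',a)\), which is the same as \(\boldsymbol\delta(q,a)\sim_{\boldsymbol{\mathcal B}_\infty}\boldsymbol\delta(q',a)\). Here the successors may coincide, or may lie in a singleton class, and both cases are allowed by the definition.

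\emph{Counting.} The unordered non-diagonal pairs identified by \(\sim_{\boldsymbol{\mathcal B}_\infty}\) are exactly the pairs inside some block. Their number is \(\sum_i\binom{b_i}{2}=h(\boldsymbol{\mathcal B}_\infty)\). By \eqref{eq:Ainfty-equivalence} these are precisely the elements of \(\boldsymbol A_\infty(n)\), which gives \(h(\boldsymbol{\mathcal B}_\infty)=|\boldsymbol A_\infty(n)|\).

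The main point needing care is \eqref{eq:Ainfty-equivalence} itself, which the paper asserts but does not prove. If I have to justify it, I will show by induction on \(t\) that \(e\in\boldsymbol A_t\) holds if and only if no admissible word of length at most \(t\) distinguishes the pair. The inductive step uses the same one-step decomposition as in Remark~\ref{rem:recursive}. One also has to observe that a diagonal successor is never distinguishing. Intersecting over all \(t\) then gives the identity.
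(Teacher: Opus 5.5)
Your proof is correct and follows essentially the same route as the paper. You identify \(\boldsymbol A_\infty(n)\) with the within-class pairs via \eqref{eq:Ainfty-equivalence}, obtain constant labels from equality of observations, and obtain stability from the congruence property (Remark~\ref{rem:recursive}); your extra verification of transitivity of \(\boldsymbol\equiv\) and your inductive sketch of \eqref{eq:Ainfty-equivalence} (pairs in \(\boldsymbol A_t\) are those not distinguished by admissible words of length at most \(t\)) are sound and fill in points the paper only asserts.
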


\begin{proof}
By \eqref{eq:Ainfty-equivalence} \(\boldsymbol A_\infty(n)\) is exactly the set of non-diagonal pairs contained in the non-singleton classes of \(\boldsymbol{\equiv}\). Equality of state observations forces every non-singleton class to have a constant catalog label. Moreover, the congruence property of \(\boldsymbol{\equiv}\) implies stability under admissible transitions. Hence the successor pair is either diagonal or belongs to a non-singleton class of \(\boldsymbol{\equiv}\). Thus the non-singleton classes form a stable block system. The identity for \(h\) is immediate from its definition.
\end{proof}

\subsection{Probability of a fixed stable block system}

For a block system \(\mathcal B\) with block sizes \(b_1,\dots,b_L\) define
\begin{equation}\label{eq:Gamma-b}
\Gamma_b(\mathcal B)
:=
(n-r(\mathcal B))\left(\frac1n\right)^b
+
\sum_{\ell=1}^L\left(\frac{b_\ell}{n}\right)^b.
\end{equation}
This is the probability that \(X_1,\ldots,X_b\) iid uniform samples from \(Q_n\) all fall in one class of \(\sim_{\mathcal B}\).

\begin{lemma}[Probability of a fixed stable block system]
\label{lem:fixed-stable-block-system-probability}
For a deterministic block system \(\mathcal B=\{B_1,\dots,B_L\}\), let
\(E_{\mathcal B}\) be the event that \(\mathcal B\) is stable. Then
\[
\mathbb P(E_{\mathcal B})
=
\prod_{i=1}^L
\left(
\sum_{j=1}^D
\rho_j^{b_i}\,
\Gamma_{b_i}(\mathcal B)^{k_j}
\right).
\]
In particular \(\mathbb P(E_{\mathcal B})\leq \prod_{i=1}^L \Gamma_{b_i}(\mathcal B)^{k_\ast}\).
\end{lemma}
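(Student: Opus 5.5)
The plan is to compute $\mathbb P(E_{\mathcal B})$ by conditioning on the labels and then using the independence of the transition values. The event $E_{\mathcal B}$ is the intersection of a label condition (i) and a transition condition (ii). Condition (ii) only involves the transitions $\boldsymbol\delta(q,a)$ with $q\in V(\mathcal B)$. The block $\mathcal B$ is deterministic, so the partition $\sim_{\mathcal B}$ is fixed and does not depend on the random data. This is what makes the computation factorize.

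First, for each block $B_i$ and label $j$, let $F_{i,j}$ be the event that every $q\in B_i$ has $\boldsymbol\xi(q)=j$. Since labels are iid with law $\rho$ and the blocks are disjoint, $\mathbb P(F_{i,j})=\rho_j^{b_i}$. The events $\{F_{i,j_i}\}_i$ are independent across blocks, and condition (i) is the disjoint union over choices $(j_1,\dots,j_L)\in\mathcal D^L$ of $\bigcap_i F_{i,j_i}$.

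Second, condition on such a labelling. Labels are independent of $\boldsymbol\delta$, so the transitions remain iid uniform. Condition (ii) becomes the intersection, over $i$ and $a\in S_{j_i}$, of the events $G_{i,a}$ that the $b_i$ values $\boldsymbol\delta(q,a)$, $q\in B_i$, all lie in one class of $\sim_{\mathcal B}$. Each $G_{i,a}$ is determined by the disjoint family of transition variables $\{\boldsymbol\delta(q,a):q\in B_i\}$. These families are disjoint for distinct $(i,a)$ because the blocks are disjoint, so the $G_{i,a}$ are mutually independent. Each $G_{i,a}$ has probability $\Gamma_{b_i}(\mathcal B)$: the $n-r(\mathcal B)$ singleton classes contribute $n^{-b_i}$ each, and block $B_\ell$ contributes $(b_\ell/n)^{b_i}$. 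Hence the conditional probability is $\prod_i\Gamma_{b_i}(\mathcal B)^{k_{j_i}}$.

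Third, sum over labellings:
\[\mathbb P(E_{\mathcal B})=\sum_{(j_1,\dots,j_L)}\prod_i\rho_{j_i}^{b_i}\Gamma_{b_i}(\mathcal B)^{k_{j_i}},\]
and this factorizes into the stated product over $i$ of $\sum_j\rho_j^{b_i}\Gamma_{b_i}(\mathcal B)^{k_j}$.

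For the inequality, use $\Gamma_{b_i}\le 1$ (it is a probability) and $k_j\ge k_\ast$ to get $\Gamma_{b_i}^{k_j}\le\Gamma_{b_i}^{k_\ast}$. Then use $\sum_j\rho_j^{b_i}\le\sum_j\rho_j=1$, since $b_i\ge2$.

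The only delicate step is the independence claim in the second step. One must check that the conditioning event depends only on labels, and that the events $G_{i,a}$ use disjoint sets of transition coordinates. Both follow from the disjointness of the blocks and the fact that $\sim_{\mathcal B}$ is fixed in advance. No other step should pose difficulty.
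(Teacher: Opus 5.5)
Your proof is correct and follows essentially the same route as the paper. Both arguments condition on the common block labels and use that the families $\{\boldsymbol\delta(q,a):q\in B_i\}$ are disjoint for distinct $(i,a)$ to obtain the factor $\Gamma_{b_i}(\mathcal B)^{k_j}$. The only difference is cosmetic: you condition on the full labelling $(j_1,\dots,j_L)$ and factor the resulting sum afterwards, whereas the paper computes each block's factor separately and multiplies using independence across blocks.
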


\begin{proof}
Let \(B_i\) be a block of size \(b_i\). For \(j\in\mathcal{D}\), let \(E_{i,j}\) be the event that \(B_i\) has constant catalog label \(j\). Then
\[
\mathbb P(E_{i,j})=\rho_j^{b_i}.
\]
Conditional on \(E_{i,j}\), stability of \(B_i\) requires that for every \(a\in S_j\) the image set
\[
\boldsymbol\delta(B_i,a):=\{\boldsymbol\delta(q,a):q\in B_i\}
\]
is contained in one class of \(\sim_{\mathcal B}\). For fixed \(a\in S_j\), the random variables \(\{\boldsymbol\delta(q,a):q\in B_i\}\) are iid uniform on \(Q_n\). Hence this probability is \(\Gamma_{b_i}(\mathcal B)\). Since distinct symbols use disjoint transition variables
\[
\mathbb P(B_i\text{ is stable}\mid E_{i,j})
=
\Gamma_{b_i}(\mathcal B)^{k_j}.
\]
Summing over the possible common labels gives the factor for \(B_i\). The events associated with distinct blocks depend on disjoint label variables and disjoint source transition variables, so the factors multiply.

Finally, since \(k_j\geq k_\ast\) and \(0\leq\Gamma_{b_i}(\mathcal B)\leq1\) we have
\[
\Gamma_{b_i}(\mathcal B)^{k_j}
\leq
\Gamma_{b_i}(\mathcal B)^{k_\ast}.
\]
Also \(\sum_j\rho_j^{b_i}\leq1\). This gives the stated upper bound.
\end{proof}

\subsection{Small stable block system bound}

The following estimate excludes all nonempty stable block systems up to the
scale reached by the pruning process.

\begin{lemma}[Small stable block system bound]\label{lem:small-obstruction-bound}
Assume \(k_\ast\geq3\) as in \eqref{eq:kstar-assumption} and let \(0<\beta<1-\frac{2}{k_\ast}\). Let \(E_n(\beta)\) be the event that there exists a stable block system \(\mathcal B\) with \(1\leq h(\mathcal B)\leq n^{1+\beta}\). Then \(\mathbb P(E_n(\beta))\to0\) as \(n\to \infty\).
\end{lemma}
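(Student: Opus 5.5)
We propose a first-moment (union) bound over all deterministic block systems with \(1\le h(\mathcal B)\le n^{1+\beta}\), using Lemma~\ref{lem:fixed-stable-block-system-probability}:
\[
\mathbb P(E_n(\beta))\le \sum_{\mathcal B:\,1\le h(\mathcal B)\le n^{1+\beta}} \prod_{i=1}^L \Gamma_{b_i}(\mathcal B)^{k_\ast}.
\]
The plan is to bound each \(\Gamma_{b}(\mathcal B)\) by a simple expression, count block systems with a prescribed multiset of block sizes, and show the resulting sum is \(o(1)\).

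\textbf{Bounding the factors.} Since \(r(\mathcal B)\le 2h(\mathcal B)\le 2n^{1+\beta}\) may exceed \(n\), we cannot assume that blocks are small. Still, \(h\le n^{1+\beta}\) forces every block to satisfy \(b_i\le \sqrt{2}\,n^{(1+\beta)/2}=o(n)\). From \eqref{eq:Gamma-b} we get
\[
\Gamma_b(\mathcal B)\le n^{1-b}+\sum_\ell (b_\ell/n)^b\le n^{1-b}+\frac{b_{\max}^{\,b-1}}{n^{b}}\,r(\mathcal B)\le n^{1-b}+\frac{b_{\max}^{\,b-1}\cdot 2h}{n^b}.
\]
A simpler usable form is
\[
\Gamma_b\le \frac{1}{n^{b-1}}\Bigl(1+\tfrac{2h\,b_{\max}^{b-1}}{n}\Bigr).
\]
For \(b=2\), this gives \(\Gamma_2\le (1+2h/n)/n\lesssim n^{\beta-1}\). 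In general,
\[
\Gamma_b\le \max_\ell (b_\ell/n)^{b-1}\cdot (1+r/n)\cdot O(1),
\]
and we will use \(\Gamma_{b}\le C\,(1+h/n)\,(\max(1,b_{\max})/n)^{b-1}\).

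\textbf{Counting.} The number of block systems with sizes \(b_1,\dots,b_L\) is at most \(\prod_i n^{b_i}/b_i!\), up to the symmetry of equal sizes. It is therefore natural to group the sum by \(L\) and the size profile. Each block then contributes roughly
\[
\frac{n^{b}}{b!}\Bigl(C(1+h/n)\,\bigl(b_{\max}/n\bigr)^{b-1}\Bigr)^{k_\ast}.
\]
For a block of size \(2\), this is \(\lesssim n^{2}(n^{\beta-1})^{k_\ast}= n^{2-k_\ast(1-\beta)}\), which is \(n^{-\varepsilon}\) for some \(\varepsilon>0\) precisely because \(\beta<1-2/k_\ast\). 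This is where the hypothesis enters. For larger \(b\), the exponent becomes
\[
b-k_\ast(b-1)(1-\beta')+k_\ast\beta,
\]
where \(b_{\max}\le n^{\beta'}\). This is even more negative as long as blocks are not too large, since the per-block cost decays geometrically in \(b\) after dividing by \(b!\). Summing over profiles, the total is bounded by
\[
\sum_{L\ge1}\Bigl(\sum_{b\ge2}c_b\Bigr)^L\le \sum_{L\ge 1}(Cn^{-\varepsilon})^L\to0.
\]

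\textbf{Main obstacle.} The difficulty is the interaction term \(\sum_\ell (b_\ell/n)^b\) when a few blocks are large, with \(b_\ell\) up to \(n^{(1+\beta)/2}\). In that regime a block image can land inside a large block with non-negligible probability, and the product no longer factorizes as cleanly. To handle this, we would split \(\mathcal B\) into small blocks (\(b\le K\) for a large constant \(K\)) and large blocks. For a large block \(B_i\), we bound \(\Gamma_{b_i}\le (2b_{\max}/n)^{b_i-1}\) and use \(b_i!\ge (b_i/e)^{b_i}\). The per-block term is then at most
\[
\bigl(e n/b_i\bigr)^{b_i}\,(2b_{\max}/n)^{k_\ast(b_i-1)},
\]
which is tiny because \(b_{\max}/n\le n^{-(1-\beta)/2}\) and \(k_\ast(1-\beta)/2>1\) when \(k_\ast\ge3\) and \(\beta<1/3\). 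We expect verifying this inequality carefully across the whole allowed range of \(\beta\), possibly via a refined bound using that the large blocks carry at most \(r\) total mass, to be the delicate part of the argument.
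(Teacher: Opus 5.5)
Your architecture is the right one: a union bound through Lemma~\ref{lem:fixed-stable-block-system-probability}, a bound on $\Gamma_b(\mathcal B)$ uniform over all systems with $h\le n^{1+\beta}$, counting by size profile, and summing an exponential series. Your $b=2$ computation also correctly shows where $\beta<1-2/k_\ast$ enters. The gap is the bound on $\Gamma_b$ for $b\ge 3$.

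You use $\sum_\ell b_\ell^b\le b_{\max}^{b-1}r$. You then even replace $r$ by $2h$, although $r\le n$ always, since the blocks are disjoint subsets of $Q_n$. Because $b_{\max}$ may be of order $n^{(1+\beta)/2}$, this only gives $\Gamma_b\lesssim (b_{\max}/n)^{b-1}\approx n^{-(b-1)(1-\beta)/2}$. So a block of size $b$ costs $n^{\,b-\gamma(b-1)}$ with $\gamma:=k_\ast(1-\beta)/2$, even before your extra factor $(1+h/n)^{k_\ast}$. For $b=3$ the exponent is $3-2\gamma$, which is nonnegative whenever $\gamma\le 3/2$, e.g.\ for $k_\ast=3$ and every $\beta\ge 0$.

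Your hedge ``as long as blocks are not too large'' excludes exactly the case the lemma must cover. The small/large split does not repair this. It controls the factors of the large blocks themselves. But the factors $\Gamma_b(\mathcal B)$ of the small blocks ($3\le b\le K$) are inflated by the $b_{\max}$ of the large ones, and you carry out no stratification by $b_{\max}$. The argument is therefore incomplete at precisely the step you call delicate.

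The missing idea is one you already used implicitly at $b=2$: control the $\ell^2$ mass of the block sizes rather than the $\ell^1$ mass. Your claim $\Gamma_2\le(1+2h/n)/n$ does not follow from your displayed general form, which gives $n^{-1}(1+2h\,b_{\max}/n)$. It is in fact an identity, $\Gamma_2=(n+2h)/n^2$, because $\sum_\ell b_\ell^2=r+2h$.

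Use this for every $b$. First, $\sum_\ell b_\ell^2\le n+2n^{1+\beta}\le 3n^{1+\beta}$ gives $b_{\max}\le\sqrt3\,n^{(1+\beta)/2}$. Then, using $b_\ell^b\le b_\ell^2\,b_{\max}^{b-2}$,
\[
\sum_\ell\Bigl(\frac{b_\ell}{n}\Bigr)^b\le\frac{b_{\max}^{b-2}}{n^b}\sum_\ell b_\ell^2\le 3^{b/2}\,n^{-b(1-\beta)/2}.
\]
Hence $\Gamma_b(\mathcal B)\le n^{1-b}+3^{b/2}n^{-b(1-\beta)/2}$ uniformly for all $b\ge2$.

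This recovers the factor of order $n^{(1-\beta)/2}$ that your bound loses when $b_{\max}$ is large. It turns the per-block exponent into $b(1-\gamma)$ and gives $\frac{n^b}{b!}\Gamma_b(\mathcal B)^{k_\ast}\le (Cn^{-\delta})^b/b!$ with $\delta=\min\{(k_\ast-2)/2,\ \gamma-1\}>0$. Note that $\gamma>1$ is exactly the hypothesis $\beta<1-2/k_\ast$, not merely $\beta<1/3$. Summing over $b$ gives at most $\exp(Cn^{-\delta})-1\to0$, so no small/large split is needed. This is the paper's argument.
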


\begin{proof}
We use a first-moment argument. Fix block sizes \(b_1,\dots,b_L\geq2\) and put
\[
r=\sum_{i=1}^L b_i,
\quad
h=\sum_{i=1}^L\binom{b_i}{2}.
\]
The number of block systems with these sizes is at most
\begin{equation}\label{eq:block-count}
\frac{1}{L!}\prod_{i=1}^L\binom{n-(b_1+\cdots+b_{i-1})}{b_i}\leq\frac{n^r}{L!\prod_{i=1}^L b_i!}.
\end{equation}
By Lemma~\ref{lem:fixed-stable-block-system-probability}, the expected
number of stable block systems with these block sizes is at most
\begin{equation}\label{eq:first-moment-sizes}
\frac{n^r}{L!\prod_{i=1}^L b_i!}
\prod_{i=1}^L \Gamma_{b_i}(\mathcal B)^{k_\ast}.
\end{equation}
We sum this bound over all size vectors satisfying \(h\leq n^{1+\beta}\).

Let \(b_{\max}:=\max_{1\leq i\leq L} b_i\). Since
\begin{equation}\label{eq:sum-b2-bound}
\sum_{i=1}^L b_i^2
=
r+2h
\leq
n+2n^{1+\beta}
\leq 3n^{1+\beta},
\end{equation}
we have
\begin{equation}\label{eq:bmax-bound}
    b_{\max}\leq \left(\sum_{i=1}^L b_i\right)^{1/2}\leq 3^{1/2} n^{(1+\beta)/2}.
\end{equation}

We now bound \(\Gamma_b(\mathcal B)\) uniformly over all such block systems.
From \eqref{eq:Gamma-b}
\[
\Gamma_b(\mathcal B)
\leq
n^{1-b}
+
\sum_{\ell=1}^L\left(\frac{b_\ell}{n}\right)^b.
\]
For \(b\geq2\) we have \(b_\ell^b=b_\ell^2b_\ell^{b-2}\leq b_\ell^2b_{\max}^{b-2}\). Therefore, by \eqref{eq:sum-b2-bound} and \eqref{eq:bmax-bound} we have
\[
\begin{aligned}
\sum_{\ell=1}^L\left(\frac{b_\ell}{n}\right)^b
&\leq
\frac{b_{\max}^{b-2}}{n^b}
\sum_{\ell=1}^L b_\ell^2
&\leq
3^{b/2} n^{(b-2)(1+\beta)/2}n^{1+\beta-b}.
\end{aligned}
\]
Since
\[
(b-2)\frac{1+\beta}{2}+1+\beta-b
=
-b\frac{1-\beta}{2},
\]
we obtain
\begin{equation}\label{eq:Gamma-uniform}
\Gamma_b(\mathcal B)
\leq
n^{1-b}
+
3^{b/2} n^{-b(1-\beta)/2},
\quad b\geq2.
\end{equation}
Define
\[
a_b
:=
\frac{n^b}{b!}
\left(
n^{1-b}
+
3^{b/2} n^{-b(1-\beta)/2}
\right)^{k_\ast},
\quad b\geq2.
\]
By \eqref{eq:Gamma-uniform} we have \(\frac{n^{b_i}}{b_i!}\Gamma_{b_i}(\mathcal B)^{k_\ast}\leq a_{b_i}\) for every block \(i\). Therefore, for fixed sizes \(b_1,\ldots,b_L\) we obtain
\[
\begin{aligned}
\frac{n^r}{L!\prod_{i=1}^L b_i!}
\prod_{i=1}^L \Gamma_{b_i}(\mathcal B)^{k_\ast}
&=
\frac1{L!}
\prod_{i=1}^L
\frac{n^{b_i}}{b_i!}\Gamma_{b_i}(\mathcal B)^{k_\ast} \\
&\leq
\frac1{L!}\prod_{i=1}^L a_{b_i}.
\end{aligned}
\]
Dropping the restriction \(h\leq n^{1+\beta}\) only increases the sum over
size vectors. Hence, for fixed \(L\) we get
\[
\begin{aligned}
&\sum_{\substack{b_1,\ldots,b_L\geq2:\ h\leq n^{1+\beta}}}
\frac{n^r}{L!\prod_{i=1}^L b_i!}
\prod_{i=1}^L \Gamma_{b_i}(\mathcal B)^{k_\ast} \\
&\qquad\leq
\frac1{L!}
\sum_{b_1,\ldots,b_L\geq2}\prod_{i=1}^L a_{b_i}
=
\frac1{L!}\left(\sum_{b=2}^n a_b\right)^L.
\end{aligned}
\]
Since \(h\geq1\) is equivalent to having \(L\geq1\), we sum over \(L\geq1\). The total contribution of all nonempty block systems with \(1\leq h\leq n^{1+\beta}\) is therefore bounded by
\[
\sum_{L\geq1}\frac1{L!}\left(\sum_{b=2}^n a_b\right)^L
=
\exp\left\{\sum_{b=2}^n a_b\right\}-1.
\]
It remains to show that \(\sum_{b=2}^n a_b\to0\).

Set \(x:=n^{1-b}\) and \(y:=C^b n^{-b(1-\beta)/2}\) where $C=3^{1/2}$. Then \(a_b=n^b(x+y)^{k_\ast}/b!\). Using
\[
(x+y)^{k_\ast}\leq 2^{k_\ast-1}(x^{k_\ast}+y^{k_\ast}),
\]
and increasing \(C\) if necessary, we get
\[
\begin{aligned}
a_b
&\leq
\frac{C^b}{b!}
\left[
n^b n^{k_\ast(1-b)}
+
n^b n^{-k_\ast b(1-\beta)/2}
\right] \\
&=
\frac{C^b}{b!}
\left[
n^{k_\ast-(k_\ast-1)b}
+
n^{-b\left(k_\ast(1-\beta)/2-1\right)}
\right].
\end{aligned}
\]
Because \(b\geq2\)
\[
k_\ast-(k_\ast-1)b
\leq
-\frac{k_\ast-2}{2}b.
\]
Also
\[
\beta<1-\frac{2}{k_\ast}
\Rightarrow
\frac{k_\ast(1-\beta)}2-1>0.
\]
Thus, with
\[
\delta:=
\min\left\{
\frac{k_\ast-2}{2},
\frac{k_\ast(1-\beta)}2-1
\right\}>0,
\]
we have \(a_b\leq (Cn^{-\delta})^b/b!\) for every \(b\geq2\). Consequently
\[
\sum_{b=2}^n a_b
\leq
\sum_{b=2}^{\infty}\frac{(Cn^{-\delta})^b}{b!}
\leq
\exp\{Cn^{-\delta}\}-1
\to0.
\]
Therefore the expected number of stable block systems satisfying
\(1\leq h(\mathcal B)\leq n^{1+\beta}\) tends to zero. Markov's inequality
gives \(\mathbb P(E_n(\beta))\to0\).
\end{proof}

\begin{remark}[Non-minimality when \(k_\ast= 1\)]
The assumption \(k_\ast\ge 3\) separates the regime treated in this paper from the cases \(k_\ast=1\) and \(k_\ast=2\) which exhibit different behavior. If \(k_\ast=1\), one expects non-minimality with high probability in general. Indeed, suppose that a catalog type \(j\) with \(\rho_j>0\) has a unique admissible symbol \(a\). Then any pair of states with catalog label \(j\) and the same \(a\)-successor coalesces after the only admissible transition and is therefore indistinguishable on all admissible continuations. In the iid full transition model, the expected number of such pairs is \(\rho_j^2(n-1)/2\), so this obstruction occurs with linear expectation.
\end{remark}

\begin{remark}[The case \(k_\ast=2\) with catalog diversity]
The proof above fails for \(k_\ast=2\) because blocks of size \(2\) become critical. However, the same argument goes through if one keeps the label factor in Lemma~\ref{lem:fixed-stable-block-system-probability}. Let
\[
\lambda_b:=\sum_{j=1}^D \rho_j^b .
\]
Assume that there exist constants \(\chi>0\) and \(C_\rho>0\) such that
\[
\lambda_b\leq C_\rho^b n^{-\chi(b-1)},
\quad b\geq2.
\]
For instance, this holds for uniform labels on \(D_n=n^\chi\) types. Then, if \(k_\ast=2\) and \(\chi>2\beta\), the same first-moment proof gives
\[
\mathbb P\left(
\begin{array}{c}
\exists\text{ stable block system }\mathcal B\text{ with} \\
1\leq h(\mathcal B)\leq n^{1+\beta}
\end{array}
\right)\to0.
\]
\end{remark}

%% file: section6-main-results.tex
\section{Main result}
\label{sec:main-results}

We now combine the collision-free estimate with the bound on stable block systems.

\begin{theorem}[Minimality with high probability]
\label{thm:minimality}
Consider the finite-catalog random Moore automaton with \(\boldsymbol\delta\sim\operatorname{Unif}(\operatorname{Tra}_{n,m})\) and iid labels \(\boldsymbol\xi(q)\sim\rho\), independent of \(\boldsymbol\delta\). Assume
\[
k_\ast:=\min_j |S_j|\geq3
\text{ and }
\sum_{j=1}^D\rho_j^2<1.
\]
Then \(\mathbb P(|\boldsymbol A_\infty(n)|>0)\to0\). In other words, the quotient of the full \(n\)-state transition system by the state congruence has \(n\) equivalence classes with probability tending to one.
\end{theorem}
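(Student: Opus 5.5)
The plan is to combine Corollary~\ref{cor:mesoscopic-entry} with Lemma~\ref{lem:small-obstruction-bound} through Lemma~\ref{lem:Ainfty-gives-stable-block-system}.

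First fix an exponent \(\beta\) satisfying both constraints. Since \(k_\ast\geq3\), we have \(1-2/k_\ast\geq 1/3>0\). So we may choose
\[
0<\beta<1-\frac{2}{k_\ast},
\]
for instance \(\beta:=\tfrac12(1-2/k_\ast)\). This \(\beta\) also lies in \((0,1)\), so Corollary~\ref{cor:mesoscopic-entry} applies, using the hypothesis \(s_0<1\). It produces deterministic rounds \(t_n\) with
\[
\mathbb P\bigl(|\boldsymbol A_{t_n}|>n^{1+\beta}\bigr)\to0.
\]

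Next, the pruning sequence is decreasing, so \(\boldsymbol A_\infty(n)\subseteq \boldsymbol A_{t_n}\) and hence \(|\boldsymbol A_\infty(n)|\leq|\boldsymbol A_{t_n}|\). Now suppose \(|\boldsymbol A_\infty(n)|>0\). By Lemma~\ref{lem:Ainfty-gives-stable-block-system}, the non-singleton classes of \(\boldsymbol\equiv\) form a stable block system \(\boldsymbol{\mathcal B}_\infty\) with \(h(\boldsymbol{\mathcal B}_\infty)=|\boldsymbol A_\infty(n)|\geq1\). This gives the inclusion of events
\[
\{|\boldsymbol A_\infty(n)|>0\}\subseteq\{|\boldsymbol A_{t_n}|>n^{1+\beta}\}\cup E_n(\beta).
\]
Indeed, if the first event on the right fails, then \(1\leq h(\boldsymbol{\mathcal B}_\infty)\leq n^{1+\beta}\), so a stable block system of the kind counted in \(E_n(\beta)\) exists.

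A union bound now finishes the argument. The first probability tends to zero by Corollary~\ref{cor:mesoscopic-entry}, and the second tends to zero by Lemma~\ref{lem:small-obstruction-bound}. The concluding reformulation follows from \eqref{eq:Ainfty-equivalence}. When \(\boldsymbol A_\infty(n)=\emptyset\), the relation \(\boldsymbol\equiv\) is equality, so the quotient has exactly \(n\) classes.

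No step here should be a genuine obstacle; the heavy lifting sits in the two cited results. The only point needing care is that a single \(\beta\) must meet both ranges, \((0,1)\) for the corollary and \((0,1-2/k_\ast)\) for the lemma, and the latter range is nonempty exactly because \(k_\ast\geq3\). I would also check that Lemma~\ref{lem:small-obstruction-bound} controls random block systems, not just fixed ones. It does, because \(E_n(\beta)\) is defined as the existence of some stable block system, so it covers the random system \(\boldsymbol{\mathcal B}_\infty\).
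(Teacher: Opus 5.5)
Your proposal is correct and follows the paper's proof essentially step for step. You choose $\beta\in(0,1-2/k_\ast)$ and use Corollary~\ref{cor:mesoscopic-entry} with monotonicity $\boldsymbol A_\infty(n)\subseteq\boldsymbol A_{t_n}$ to show that $|\boldsymbol A_\infty(n)|\le n^{1+\beta}$ with high probability. Lemma~\ref{lem:Ainfty-gives-stable-block-system} then places the remaining event $\{0<|\boldsymbol A_\infty(n)|\le n^{1+\beta}\}$ inside $E_n(\beta)$, which Lemma~\ref{lem:small-obstruction-bound} controls, and your two bookkeeping checks on the range of $\beta$ and the randomness of $\boldsymbol{\mathcal B}_\infty$ are both right.
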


\begin{proof}
Choose \(0<\beta<1-2/k_\ast\). Since \(k_\ast\geq3\), such a \(\beta\) exists and satisfies \(\beta<1/2\). By Corollary~\ref{cor:mesoscopic-entry}, there exists a deterministic time \(t_n=O(\log\log n)\) such that
\[
\mathbb P\left(|\boldsymbol A_{t_n}|>n^{1+\beta}\right)\to0.
\]
Since the pruning process is decreasing, \(\boldsymbol A_\infty(n)\subseteq \boldsymbol A_{t_n}\). Therefore
\[
\begin{aligned}
\mathbb P(|\boldsymbol A_\infty(n)|>0)
&\leq
\mathbb P\left(|\boldsymbol A_{t_n}|>n^{1+\beta}\right) \\
&\quad+
\mathbb P\left(0<|\boldsymbol A_\infty(n)|\leq n^{1+\beta}\right).
\end{aligned}
\]
The first term tends to zero. On the event \(\{0<|\boldsymbol A_\infty(n)|\leq n^{1+\beta}\}\), Lemma~\ref{lem:Ainfty-gives-stable-block-system} produces a stable block system \(\boldsymbol{\mathcal B}_\infty\) satisfying
\[
1\leq h(\boldsymbol{\mathcal B}_\infty)
=|\boldsymbol A_\infty(n)|
\leq n^{1+\beta}.
\]
Hence this event is contained in \(E_n(\beta)\). By Lemma~\ref{lem:small-obstruction-bound}, \(\mathbb P(E_n(\beta))\to0\). Consequently \(\mathbb P(|\boldsymbol A_\infty(n)|>0)\to0\).
\end{proof}

\begin{corollary}[Reachable part]
\label{cor:reachable-part}
Assume the hypotheses of Theorem~\ref{thm:minimality}. Let
\[
    \operatorname{acc}(\boldsymbol{\mathcal A}_n)
    :=
    \{\boldsymbol\delta^\ast(1,u):u\in\Sigma^\ast\}
\]
be the reachable part of \(\boldsymbol{\mathcal A}_n\). Then
\[
\mathbb P\left(\operatorname{acc}(\boldsymbol{\mathcal A}_n)\text{ is minimal}\right)\to1.
\]
\end{corollary}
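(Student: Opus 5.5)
The plan is to deduce the corollary directly from Theorem~\ref{thm:minimality}, by showing that minimality of the reachable part is implied by the absence of any non-singleton class of \(\boldsymbol\equiv\) on the full state set \(Q_n\). Note that, per the definition of minimality, \(\operatorname{acc}(\boldsymbol{\mathcal A}_n)\) is minimal iff every equivalence class of the congruence on the reachable part is a singleton. So the first step is to check that the congruence computed inside the sub-automaton \(\operatorname{acc}(\boldsymbol{\mathcal A}_n)\) coincides with the restriction of \(\boldsymbol\equiv\) on \(Q_n\) to reachable states.

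For this I would use that the reachable set is closed under all transitions, since \(\boldsymbol\delta(\boldsymbol\delta^\ast(1,u),a)=\boldsymbol\delta^\ast(1,ua)\). Hence for reachable \(q,q'\) and any word \(w\) admissible from both, every intermediate state \(\boldsymbol\delta^\ast(q,w')\) is reachable, with the same label and admissible set as in the full system. Admissibility of \(w\) from \(q\) and the observations \(\obs(\boldsymbol\delta^\ast(q,w))\) are therefore identical whether computed in the sub-automaton or in the full one. Consequently \(q\equiv q'\) in \(\operatorname{acc}(\boldsymbol{\mathcal A}_n)\) iff \(q\boldsymbol\equiv q'\) in \(\boldsymbol{\mathcal A}_n\). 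Equivalently, one can argue through Remark~\ref{rem:recursive}, whose recursion only refers to successors, which stay inside the reachable set.

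It follows that non-minimality of \(\operatorname{acc}(\boldsymbol{\mathcal A}_n)\) produces distinct reachable \(q\ne q'\) with \(\{q,q'\}\in\boldsymbol A_\infty(n)\) by \eqref{eq:Ainfty-equivalence}. This gives the event inclusion
\[
\{\operatorname{acc}(\boldsymbol{\mathcal A}_n)\text{ not minimal}\}\subseteq\{|\boldsymbol A_\infty(n)|>0\},
\]
and Theorem~\ref{thm:minimality} finishes the proof.

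There is no genuine obstacle here. The only point needing care is the closure of the reachable set under transitions, which makes the induced congruence agree with the restricted one. Note also that we need no information about the size of the reachable part.
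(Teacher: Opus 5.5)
Your proposal is correct and follows the same route as the paper, whose one-line proof notes that two reachable states equivalent in the restricted automaton are also equivalent in the full automaton; this gives the inclusion into \(\{|\boldsymbol A_\infty(n)|>0\}\), and Theorem~\ref{thm:minimality} finishes. Your explicit check that the reachable set is closed under transitions, so that admissibility and observations agree in the sub-automaton, is exactly the detail the paper leaves implicit.
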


\begin{proof}
If two reachable states are equivalent in the restricted automaton, then they are also equivalent as states of the full automaton.
\end{proof}

\begin{remark}[Accessible model]
Theorem~\ref{thm:minimality} is stated for \(\operatorname{Tra}_{n,m}\), where transition values are independent. The uniformly accessible model is a natural next target, but the proof above relies on this independence in both the local exploration estimate and the obstruction count. The corollary nevertheless shows that the actually reachable part is minimal with high probability.
\end{remark}

%% file: section7-conclusion.tex
\section{Concluding remarks}
\label{sec:conclusion}

We proved that in the finite-catalog model the prefix-dependent congruence is trivial with high probability. Thus the full \(n\)-state random transition system is minimal with respect to the congruence considered here, provided the catalog law is non-degenerate and every catalog type admits at least three admissible symbols. In particular, the reachable automaton is minimal with high probability.

The proof separates two effects. The early pruning dynamics is governed by a collision-free recursion and uses only local tree-likeness of finite explorations. The limiting set requires a different argument: it is generated by congruence classes, and therefore any nonempty limiting set gives a stable block system. A first-moment bound excludes such obstructions up to the scale reached by the pruning process when \(k_\ast\geq3\).

The main limitation of our result is that the final stable-block-system bound is proved only in the iid uniform transition model. The argument suggests, however, that uniformity may not be essential: the local exploration and first-moment estimates should remain valid for independent, possibly non-identically distributed transitions. Extending the result to such inhomogeneous transition models is a natural next step. Other extensions include weakening the condition \(k_\ast\geq3\) and incorporating growing catalog diversity in the critical case \(k_\ast=2\).